\newcommand{\wt}{{\mathrm{wt}}}
\newcommand{\ord}{{\mathrm{ord}}}
\newcommand{\Z}{\mathbb{{Z}}}
\newcommand{\tr}{{\mathrm{Tr}}}
\newcommand{\gf}{{\mathrm{GF}}}
\newcommand{\C}{{\mathcal{C}}}
\newcommand{\bD}{{\mathbf{D}}}
\newcommand{\bc}{{\mathbf{c}}}
\begin{document}

\title{It is indeed a fundamental construction of all linear codes
}


\author{Can Xiang       
}


\institute{  C. Xiang \at
                College of Mathematics and Informatics, South China Agricultural University, Guangzhou, 510642, China \\
              \email{cxiangcxiang@hotmail.com}
}

\date{Received: date / Accepted: date}

\maketitle

\begin{abstract}
Linear codes are widely employed in communication systems, consumer electronics, and storage devices. All linear codes over finite fields can be generated by a generator matrix. Due to this, the generator matrix
approach is called a fundamental construction of linear codes. This is the only known construction method
that can produce all linear codes over finite fields. Recently, a defining-set construction of linear codes over finite fields has attracted a lot of attention, and have been employed to produce a huge number of
classes of linear codes over finite fields. It was claimed that this approach can also generate all linear
codes over finite fields. But so far, no proof of this claim is given in the literature. The objective of
this paper is to prove this claim, and confirm that the defining-set approach is indeed a fundamental
approach to constructing all linear codes over finite fields. As a byproduct, a trace representation of all
linear codes over finite fields is presented.

\keywords{Cyclic codes \and linear codes \and weight distribution \and weight enumerator \and trace function}
\end{abstract}

\section{Introduction}\label{sec-intro}

Throughout this paper, let $q$ be a power of a prime $p$. An $[n,k,d]$ linear code $\C$ over $\gf(q)$ is a $k$-dimensional subspace of $\gf(q)^n$ with minimum Hamming distance $d$.
Let $A_i$ denote the number of codewords with Hamming weight $i$ in a linear code
$\C$ of length $n$. The {\em weight enumerator} of $\C$ is defined by
$$
1+A_1z+A_2z^2+ \cdots + A_nz^n.
$$
The {\em weight distribution} of $\C$ is the sequence $(1,A_1,\ldots,A_n)$.

An $[n,k]$ linear code $\C$ over $\gf(q)$ is called {\em cyclic} if
$(c_0,c_1, \cdots, c_{n-1}) \in \C$ implies $(c_{n-1}, c_0, c_1, \cdots, c_{n-2})$
$\in \C$.
We can identify a vector $(c_0,c_1, \cdots, c_{n-1}) \in \gf(q)^n$
with
$$
c_0+c_1x+c_2x^2+ \cdots + c_{n-1}x^{n-1} \in \gf(q)[x]/(x^n-1).
$$
In this way, a code $\C$ of length $n$ over $\gf(q)$ corresponds to a subset of the quotient ring
$\gf(q)[x]/(x^n-1)$.
A linear code $\C$ is cyclic if and only if the corresponding subset in $\gf(q)[x]/(x^n-1)$
is an ideal of the ring $\gf(q)[x]/(x^n-1)$.

It is well-known that every ideal of $\gf(q)[x]/(x^n-1)$ is principal. Let $\C=\langle g(x) \rangle$ be a
cyclic code, where $g(x)$ is monic and has the smallest degree among all the
generators of $\C$. Then $g(x)$ is unique and called the {\em generator polynomial,}
and $h(x)=(x^n-1)/g(x)$ is referred to as the {\em check polynomial} of $\C$.

Cyclic codes over finite fields can be generated by a generator matrix, or a generator polynomial,
or a generating idempotent. Under certain conditions, cyclic codes over finite fields have a trace
representation described in the following theorem whose proof is based on Delsarte's Theorem
\cite{Wolf,Dels}.

\begin{theorem}[Wolfmann]\label{thm-tracecycliccodes}
Let $\C$ be a cyclic code of length $n$ over $\gf(q)$ with parity-check polynomial
$h(x)$, where $\gcd(n,q)=1$. Let $\beta$ be a primitive $n$-th root of unity over $\gf(q^m)$, where $m:=\ord_n(q)$ is the
order of $q$ modulo $n$. Let $J$ be a subset of $\Z_n=\{0,1,2, \cdots, n-1\}$ such that
$$
h^*(x)=\prod_{j \in J} m_{\beta^j}(x),
$$
where $m_{\beta^j}(x)$ denotes the minimal polynomial of $\beta^j$ over $\gf(q)$, and $h^*(x)$ is the
reciprocal of $h(x)$. Then $\C$ consists of all the following codewords
$$
c_a(x)=\sum_{i=0}^{n-1} \tr(f_a(\beta^i)) x^i,
$$
where $\tr$ denotes the trace function from $\gf(q^m)$ to $\gf(q)$, and
$$
f_a(x)=\sum_{i \in J} a_j x^j, \ \ a_j \in \gf(q^m).
$$
\end{theorem}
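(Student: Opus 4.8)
The plan is to recast the claimed set of codewords as a trace image and then apply Delsarte's Theorem. First I would introduce the auxiliary $\gf(q^m)$-linear code $\cB\subseteq\gf(q^m)^n$ generated by the rows $\bg_j=(1,\beta^{j},\beta^{2j},\dots,\beta^{(n-1)j})$ with $j\in J$. Taking the trace coordinatewise sends $\sum_{j\in J}a_j\bg_j$ to the vector whose $i$-th entry is $\tr\!\big(\sum_{j\in J}a_j\beta^{ij}\big)=\tr(f_a(\beta^i))$, so $\tr(\cB)$ is exactly the set of codewords $c_a(x)$ in the statement. Hence the theorem is equivalent to the identity $\tr(\cB)=\C$. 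Delsarte's Theorem states that the dual over $\gf(q)$ of a trace code equals the subfield subcode of the dual, i.e. $(\tr(\cB))^\perp=\cB^\perp\cap\gf(q)^n$; dualizing once more over $\gf(q)$ turns the goal into the cleaner identity $\cB^\perp\cap\gf(q)^n=\C^\perp$.

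Next I would compute the left-hand side explicitly. Writing $v(x)=\sum_i v_ix^i$, a vector $v$ lies in $\cB^\perp$ precisely when $v(\beta^{j})=0$ for every $j\in J$. If moreover $v\in\gf(q)^n$, then $v(\beta^{jq^{s}})=v(\beta^{j})^{q^{s}}=0$ for all $s$, so $v$ vanishes at $\beta^{e}$ for every $e$ in the $q$-cyclotomic coset of each $j\in J$. Because $h^*(x)=\prod_{j\in J}m_{\beta^j}(x)$, these are exactly the roots of $h^*$, and conversely any $\gf(q)$-vector vanishing there is divisible by $h^*$. Since $\gcd(n,q)=1$ makes $x^n-1$ separable and forces $h^*\mid x^n-1$, it follows that $\cB^\perp\cap\gf(q)^n=\langle h^*(x)\rangle$, the cyclic code of multiples of $h^*$ in $\gf(q)[x]/(x^n-1)$.

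It then remains to recognize $\langle h^*(x)\rangle$ as $\C^\perp$. From $g(x)h(x)=x^n-1$ and the standard description of the dual of a cyclic code via the reciprocal of its check polynomial, one gets $\C^\perp=\langle h^*(x)\rangle$, which closes the chain $\tr(\cB)=(\cB^\perp\cap\gf(q)^n)^\perp=(\C^\perp)^\perp=\C$. I expect the main obstacle to be bookkeeping rather than depth: stating Delsarte's Theorem in the exact form used above (which dual, which subfield subcode), and carrying out the subfield-subcode computation together with the identification $\langle h^*\rangle=\C^\perp$ while handling the reciprocals and the monic normalizations correctly. The hypothesis $\gcd(n,q)=1$ enters precisely to guarantee separability of $x^n-1$, so that the cyclotomic-coset description of the roots is valid.

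As a cross-check I would also keep in mind a direct argument: expanding $c_a(\beta^\ell)=\sum_{i}\tr(f_a(\beta^i))\beta^{i\ell}$ by linearity of the trace and using that $\sum_{i=0}^{n-1}\beta^{it}$ equals $n$ or $0$ according as $n\mid t$ or not, one sees that $c_a(\beta^\ell)=0$ exactly when $\beta^\ell$ is a zero of $\C$, which proves $c_a\in\C$; a dimension count — after reconciling $a_j\in\gf(q^m)$ with the cyclotomic-coset sizes via transitivity of the trace — then shows the image is all of $\C$. This route makes surjectivity the delicate step, whereas the Delsarte route supplies it for free.
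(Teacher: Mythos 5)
Your proof is correct: the reduction via Delsarte's theorem to the identity $(\tr(\cB))^\perp=\cB^\perp\cap\gf(q)^n$, the computation of that subfield subcode as $\langle h^*(x)\rangle$ (using that $\gcd(n,q)=1$ makes $x^n-1$ separable, so vanishing on the cyclotomic cosets of $J$ is equivalent to divisibility by $h^*$), and the standard fact $\C^\perp=\langle h^*(x)\rangle$ combine to give $\tr(\cB)=(\C^\perp)^\perp=\C$. The paper itself contains no proof of this theorem --- it states it as Wolfmann's result and remarks only that the proof ``is based on Delsarte's Theorem'' --- so your argument follows precisely the route the paper attributes to the original source.
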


This trace representation of certain sub-classes of cyclic codes was known for a long time. For example, the trace
representation of irreducible dates back at least to Baumert and McEliece \cite{BM72}. A trace
description of irreducible quasi-cyclic codes was ginen in \cite{SD90}. The trace representation
of cyclic codes over finite fields in Theorem \ref{thm-tracecycliccodes} was presented and proved by Wolfmann in
\cite{Wolf} under the restriction that $\gcd(q, n)=1$. It demonstrates another way of generating many cyclic codes
over finite fields.
The importance of this trace representation is mostly demonstrated by its application in determining the weight
distribution (also called the weight enumerator) of cyclic codes over finite fields. The trace representation allows
one to determine the weight distribution of a cyclic code by evaluating certain types of character sums over finite
fields, and has led to a lot of recent progress on the weight distribution problem of cyclic codes \cite{DLLZ,DY13}.

However, to the best knowledge of the author, no similar trace representation of linear codes over finite fields is
available in the literature. Even the trace representation in Theorem \ref{thm-tracecycliccodes} has the restriction
that $\gcd(n, q)=1$ and thus applies to only a special type of cyclic codes. One objective of this paper
is to give a
trace representation of all linear codes over finite fields.

It is well known that all linear codes over finite fields can be generated with a generator matrix.
Because of this fact, the generator matrix approach is a fundamental approach to constructing all
linear codes over finite fields and is the only one. Recently, a defining-set construction of linear
codes over finite
fields has been intensively investigated, and shown to be a promising approach, as many classes of
linear codes with good parameters have been produced. It was claimed in \cite{Dingconf} that all
linear codes over finite fields can be produced with this approach. But so far, no proof has been
seen in the literature. Another objective of this paper is to provide this claim, and confirm that
it is indeed a fundamental construction of all linear codes over finite fields.

\section{A generic construction of linear codes over finite fields}

Throughout this section, let $q$ be a prime power and let $r=q^m$, where $m$ is a positive integer.
Let $\tr$ denote the trace function from $\gf(r)$ to $\gf(q)$ unless otherwise stated.

\subsection{The description of the construction}

Let $D=\{d_1, \,d_2, \,\ldots, \,d_n\} \subseteq \gf(r)$.
We define a code of
length $n$ over $\gf(q)$ by
\begin{eqnarray}\label{eqn-maincode}
\C_{D}=\{(\tr(xd_1), \tr(xd_2), \ldots, \tr(xd_n)): x \in \gf(r)\},
\end{eqnarray}
and call $D$ the \emph{defining set} of this code $\C_{D}$.
Since the trace function is linear, the code $\C_D$ is linear.  By definition, the
dimension of the code $\C_D$ is at most $m$.

Different orderings of the elements of $D$ give different linear codes, which are however
permutation equivalent. Hence, in this paper, we do not distinguish these codes obtained
by different orderings, and do not consider the ordering of the elements in $D$. It should
be noticed that the defining set $D$ could be a multiset, i.e., some elements in $D$ may be
the same.

\subsection{The generator matrix of the trace code $\C_D$}\label{sec-generatormatrix}

Every linear code over a finite field must have a generator matrix. In this subsection, we
derive a generator matrix for the trace code $\C_D$, where
\begin{eqnarray}\label{eqn-definingsetD}
D=\{d_1, \,d_2, \,\ldots, \,d_n\} \subseteq \gf(r).
\end{eqnarray}

Let $\{\alpha_1, \alpha_2, \cdots, \alpha_m \}$ be a basis of $\gf(r)$ over $\gf(q)$,
and $\{\beta_1, \beta_2, \cdots, \beta_m \}$ be its dual basis. By the definition of the
dual basis,
\begin{eqnarray}\label{eqn-dualbasis}
\tr(\alpha_i \beta_j)=\left\{
\begin{array}{ll}
0 & \mbox{ for } i \ne j, \\
1 & \mbox{ for } i = j.
\end{array}
\right.
\end{eqnarray}
Note that every basis of $\gf(q^m)$ over $\gf(q)$ has its dual basis \cite[p. 58]{LN}.

Let
$$
d_i=\sum_{j=1}^m d_{j,i} \alpha_j, \ \mbox{ where all }  d_{j,i} \in \gf(q)
$$
and
$$
x=\sum_{h=1}^m x_h \beta_h, \ \mbox{ where all }  x_h \in \gf(q).
$$
Then we have
\begin{eqnarray}
\tr(d_ix) = \tr\left(\sum_{h=1}^m \sum_{j=1}^m x_{h} d_{j,i} \tr(\alpha_j \beta_h)\right)
= \sum_{h=1}^m x_h d_{h,i}.
\end{eqnarray}
Consequently,
\begin{eqnarray}
\bc_x= (\tr(d_1x), \tr(d_2x), \cdots, \tr(d_nx))=(x_1, x_2, \cdots, x_n) \bD
\end{eqnarray}
where
\begin{eqnarray}\label{eqn-generatormatrixD}
\bD=\left[
\begin{array}{llll}
d_{1,1}  & d_{1,2} & \cdots &  d_{1,n} \\
d_{2,1}  & d_{2,2} & \cdots & d_{2,n} \\
\vdots  & \vdots & \vdots & \vdots \\
d_{m,1}  & d_{m,2} & \cdots & d_{m,n}
\end{array}
\right].
\end{eqnarray}
As a result, $\bD$ is a generator matrix of the code $\C_D$, and depends on the choice of
the basis $\{\alpha_1, \alpha_2, \cdots, \alpha_m \}$. We will need this generator matrix
later in this paper.

\subsection{Weights in the codes $\C_D$}

Define for each $x \in \gf(r)$,
\begin{eqnarray}\label{eqn-mcodeword}
\bc_{x}=(\tr(xd_1), \,\tr(xd_2), \,\ldots, \,\tr(xd_n)),
\end{eqnarray}
The Hamming weight $\wt(\bc_x)$ of $\bc_x$ is $n-N_x(0)$, where
$$
N_x(0)=\left|\{1 \le i \le n: \tr(xd_i)=0\}\right|
$$
for each $x \in \gf(r)$.

It is easily seen that for any $D=\{d_1,\,d_2,\,\ldots, \,d_n\} \subseteq \gf(r)$
we have
\begin{eqnarray}\label{eqn-hn3}
qN_x(0)
&=& \sum_{i=1}^n \sum_{y \in \gf(q)} \tilde{\chi}_1 (y\tr(xd_i)) \nonumber \\
&=& \sum_{i=1}^n \sum_{y \in \gf(q)} \chi_1(yxd_i) \nonumber  \\
&=& n + \sum_{i=1}^n \sum_{y \in \gf(q)^*} \chi_1(yxd_i)  \nonumber \\
&=& n + \sum_{y \in \gf(q)^*} \chi_1(yxD),
\end{eqnarray}
where $\chi_1$ and $\tilde{\chi}_1$ are the canonical additive characters of $\gf(r)$ and $\gf(q)$, respectively,
$aD$ denotes the set
$\{ad: d \in D\}$, and $\chi_1(S):=\sum_{x \in S} \chi_1(x)$ for any subset $S$ of $\gf(r)$.
Hence,
\begin{eqnarray}\label{eqn-weight}
\wt(\bc_x)=n-N_x(0)=\frac{(q-1)n-\sum_{y \in \gf(q)^*} \chi_1(yxD)}{q}.
\end{eqnarray}
Thus, the computation of the weight distribution of the code $\C_D$ reduces to the determination of the value distribution
of the character sum
$$
\sum_{y \in \gf(q)^*} \sum_{i=1}^n \chi_1(yxd_i).
$$

\subsection{Comments on the trace construction and the objective of this paper}

This construction technique has a long history \cite{BM72}, and was employed in \cite{DN07},
\cite{DLN} and \cite{Dingconf} for obtaining linear codes with a few weights. Recently,
this trace construction of linear codes has attracted a lot of attention, and a
huge amount of linear codes with good parameters are obtained in
\cite{Ding151,DD15,HY15,HY152,Lifive,Tang5,XuCao,Zhoufour}.

It was claimed in \cite{Dingconf} that every linear code $\C$ over a finite field $\gf(q)$ has a trace
construction, i.e., $\C=\C_D$ for some defining set $D \in \gf(q^m)$, where $m$ is a positive integer.
However, to the best of the author's knowledge, no proof of this claim
is available in the literature. We will prove this statement shortly.

\section{A trace representation of linear codes over finite fields}\label{sec-tracerepresentation}

In this section, we will give a trace representation of all linear codes over finite fields. Specifically, we
will prove that any linear code $\C$ of length $n$ over $\gf(q)$ can be expressed as a trace code $\C_D$,
where $D$ is a subset of $\gf(q^m)$ for some positive integer $m$.

Let
\begin{eqnarray}\label{eqn-generatormatrixC}
\bD'=\left[
\begin{array}{llll}
d'_{1,1}  & d'_{1,2} & \cdots &  d'_{1,n} \\
d'_{2,1}  & d'_{2,2} & \cdots & d'_{2,n} \\
\vdots  & \vdots & \vdots & \vdots \\
d'_{k,1}  & d'_{k,2} & \cdots & d'_{k,n}
\end{array}
\right].
\end{eqnarray}
be any matrix whose row vectors span a linear code $\C$ of length $n$ over $\gf(q)$, where $k$ is equal to
or more than the dimension of $\C$.

Set $m=\max\{k, \lceil \log_q n \rceil \}$. By definition, we have $m \geq k$ and $n \leq q^m$.
Now put
$$
d'_i=\sum_{j=1}^k d'_{j,i} \alpha_j \mbox{ for all } 1 \leq i \leq n,
$$
where $\{ \alpha_1, \alpha_2, \cdots, \alpha_m\}$ is a basis of $\gf(q^m)$ over $\gf(q)$.

Define $D'=\{d'_1, d'_2, \cdots, d'_n\}$. It then follows from the
discussion in Section \ref{sec-generatormatrix} that the trace code
$\C_{D'}$ has generator matrix $\bD'$, and is thus equal to $\C$.

In general, we would have $m$ as small as possible when we wish to have a trace construction
of a linear code $\C$. To this end, we may have a generator matrix whose row vectors are linearly
independent over $\gf(q)$, i.e., the parameter $k$ above is equal to the dimension of $\C$.
It is now clear that a linear code over a finite field has many trace constructions (representations),
depending on the choice of a generator matrix.

\begin{example}
Let $q=2$ and $n=7$. Let $\C$ be the binary code with length $n=7$ and generator matrix
\begin{eqnarray}
G=\left[
\begin{array}{lllllll}
1 & 0 & 0 & 1 & 1 & 0 & 0 \\
0 & 1 & 0 & 0 & 1 & 1 & 0 \\
0 & 0 & 1 & 0 & 0 & 1 & 1
\end{array}
\right].
\end{eqnarray}
Then $\C$ has dimension $k=3$. Let $m=\max\{k, \lceil \log_q n \rceil \}=3$.
Let $\alpha$ be a generator of $\gf(2^3)^*$ with $\alpha^3+\alpha+1=0$.
Define $D=\{d_1, d_2, d_3, d_4, d_5, d_6, d_7\}$, where
$$
d_1=1, \  d_2=\alpha, \  d_3=\alpha^2, \ d_4=1, \ d_5=\alpha^3, \ d_6=\alpha^4, \ d_7=\alpha^2.
$$
Then the trace representation of $\C$ is the code $\C_D$ of (\ref{eqn-maincode}),
where $\tr$ is the trace function from $\gf(2^3)$ to $\gf(2)$.
\end{example}

\begin{example}
Let $q=2$ and $n=7$. Let $\C$ be the binary code with length $n=7$ and generator matrix
\begin{eqnarray}
G=\left[
\begin{array}{lllllll}
1 & 0 & 0 & 0 & 1 & 1 & 0 \\
0 & 1 & 0 & 0 & 0 & 1 & 1 \\
0 & 0 & 1 & 0 & 1 & 0 & 1 \\
0 & 0 & 0 & 1 & 0 & 1 & 1
\end{array}
\right].
\end{eqnarray}
Then $\C$ has dimension $k=4$. Let $m=\max\{k, \lceil \log_q n \rceil \}=4$.
Let $\alpha$ be a generator of $\gf(2^4)^*$ with $\alpha^4+\alpha+1=0$.
Define $D=\{d_1, d_2, d_3, d_4, d_5, d_6, d_7\}$, where
$$
d_1=1, \  d_2=\alpha, \  d_3=\alpha^2, \ d_4=\alpha^3, \ d_5=\alpha^8, \ d_6=\alpha^7, \ d_7=\alpha^{11}.
$$
Then the trace representation of $\C$ is the code $\C_D$ of (\ref{eqn-maincode}),
where $\tr$ is the trace function from $\gf(2^4)$ to $\gf(2)$.
\end{example}

The trace representation of this section gives naturally a trace construction of all cyclic codes
over finite fields without the condition that $\gcd(n, q)=1$. Recall that Theorem \ref{thm-tracecycliccodes}
requires that $\gcd(n,q)=1$.

\section{Another trace representation of all cyclic codes over finite fields}

In this section, we give a trace representation of all cyclic codes over finite fields.
This representation may be related to the $q$-polynomial approach to cyclic codes developed
in \cite{DingLing}.

Let $\C$ be a cyclic code of length $n$ over $\gf(q)$. A polynomial
$$
f(x)=\sum_{i=0}^{n-1} f_i x^i \in \gf(q)[x]
$$
generates $\C$ if and only if $\gcd(f(x), x^n-1)$ is equal to the generator polynomial of
$\C$. There are many such polynomials $f(x)$ that generates $\C$, e.g., the generator
polynomial and the generating idempotent of $\C$. Such a polynomial $f$ can be employed
to give a special trace representation of the code $\C$. The following theorem gives such a
representation.

\begin{theorem}\label{thm-allcyclic}
Let $f(x)=\sum_{i=0}^{n-1} f_i x^i \in \gf(q)[x]$ be any polynomial that generates a cyclic
code $\C$ of length $n$ over $\gf(q)$. Let $\alpha$ be a normal element of $\gf(q^n)$ over
$\gf(q)$. Define
$$
d=\sum_{j=1}^{n}  f_{n-j}  \alpha^{q^{j}}
$$
and
$$
D=\left\{d, d^q, d^{q^2}, \cdots, d^{q^{n-1}}\right\}.
$$
Then $\C=\C_D$, which is the trace code with defining set $D$, where the trace function is
from $\gf(q^n)$ to $\gf(q)$.
\end{theorem}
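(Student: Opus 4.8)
The plan is to represent each codeword of $\C_D$ as an element of the quotient ring $\gf(q)[x]/(x^n-1)$ and to show directly that the set of these polynomials is exactly the principal ideal $\langle f(x)\rangle$, which is the cyclic code generated by $f$. The whole argument rests on a single convolution identity obtained by expanding $d$ in the normal basis $\{\alpha,\alpha^q,\ldots,\alpha^{q^{n-1}}\}$ afforded by the normal element $\alpha$.

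First I would record the codeword in $0$-indexed form: for $x\in\gf(q^n)$ the $i$-th coordinate of $\bc_x$ is $\tr(xd^{q^i})$, since $D=\{d,d^q,\ldots,d^{q^{n-1}}\}$. Writing $v_s:=\tr(x\alpha^{q^s})$, a quantity periodic in $s$ modulo $n$ because $\alpha^{q^n}=\alpha$, and using that the coefficients $f_{n-j}$ lie in $\gf(q)$ and are therefore fixed by the Frobenius, I would expand
\[
\tr(x d^{q^i}) = \sum_{j=1}^{n} f_{n-j}\,\tr\!\left(x\alpha^{q^{j+i}}\right) = \sum_{j=1}^{n} f_{n-j}\, v_{j+i}.
\]
Reindexing by $s=j+i$ and reducing indices modulo $n$ turns this into a cyclic convolution, so that the polynomial $\bc_x(y)=\sum_{i=0}^{n-1}\tr(xd^{q^i})\,y^i$ satisfies
\[
\bc_x(y)= f(y)\,u_x(y) \quad \text{in } \gf(q)[y]/(y^n-1), \qquad u_x(y):=\sum_{s=0}^{n-1} v_s\, y^s.
\]
The hard part will be establishing this factorization cleanly, in particular getting the index bookkeeping right so that the reindexed inner sum collapses to $y^s f(y)$ inside the quotient ring; this is the one place where the exact form of $d$ is used.

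Once the identity $\bc_x(y)=f(y)\,u_x(y)$ is in hand, the rest is structural. The map $x\mapsto(v_0,v_1,\ldots,v_{n-1})$ sends $\gf(q^n)$ to $\gf(q)^n$; because $\alpha$ is normal, $\{\alpha^{q^s}\}_{s=0}^{n-1}$ is a basis of $\gf(q^n)$ over $\gf(q)$, and the nondegeneracy of the trace form (the same dual-basis fact cited on p.~58 of \cite{LN}) then forces this linear map to be a bijection. Hence, as $x$ ranges over $\gf(q^n)$, the polynomial $u_x(y)$ ranges over all of $\gf(q)[y]/(y^n-1)$, giving
\[
\C_D=\{\, f(y)\,u(y) : u(y)\in \gf(q)[y]/(y^n-1)\,\}=\langle f(y)\rangle.
\]

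Finally I would invoke the characterization recalled just before the theorem: since $f$ generates $\C$, the polynomial $\gcd(f(x),x^n-1)$ equals the generator polynomial of $\C$, so $\langle f(y)\rangle=\C$ and therefore $\C_D=\C$. As an internal consistency check, one may also observe directly that $\C_D$ is cyclic: the same Frobenius-invariance of the trace shows that a cyclic shift of $\bc_x$ equals $\bc_{x^q}$, so the code is closed under shifts. This remark is not logically necessary, since the main argument already identifies $\C_D$ with an ideal, but it confirms that $\C_D$ is a legitimate candidate to equal the cyclic code $\C$.
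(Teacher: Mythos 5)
Your proof is correct, but it takes a genuinely different route from the paper's. The paper proves the theorem as a quick corollary of its general trace representation (Sections 2.2 and 3): it writes down the $n\times n$ circulant matrix whose rows are the cyclic shifts of $(f_0,\ldots,f_{n-1})$ (these span $\C$ precisely because $\gcd(f(x),x^n-1)$ is the generator polynomial), reads off the elements $d_i$ whose coordinate vectors in the normal basis $\{\alpha,\alpha^q,\ldots,\alpha^{q^{n-1}}\}$ are the columns of that matrix, and observes that $d_i=d^{q^{i-1}}$, so that $\C_D$ has the circulant as a generator matrix and hence equals $\C$. You instead work directly in $\gf(q)[y]/(y^n-1)$: the convolution identity $\bc_x(y)=f(y)\,u_x(y)$ with $u_x(y)=\sum_{s}\tr(x\alpha^{q^s})y^s$, combined with the bijectivity of $x\mapsto u_x$ (normality of $\alpha$ plus nondegeneracy of the trace form), identifies $\C_D$ with the ideal $\langle f(y)\rangle=\langle\gcd(f(y),y^n-1)\rangle=\C$. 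Your argument is self-contained, it proves slightly more (every codeword of $\C_D$ is literally a polynomial multiple of $f$, which also gives cyclicity for free), and it makes transparent why $d$ must be defined with the reversed coefficients $f_{n-j}$: that reversal is exactly what turns the trace expansion into the cyclic convolution $f\cdot u_x$ rather than a correlation against the reciprocal of $f$. This is in fact a point where your bookkeeping is tighter than the paper's own: the paper defines $d_i=\sum_{j=0}^{n-1}f_{(j+i-1)\bmod n}\,\alpha^{q^j}$ and asserts $d_i=d^{q^{i-1}}$, but with the stated $d$ the Frobenius power $d^{q^{i-1}}$ has coefficient $f_{(i-1-j)\bmod n}$ on $\alpha^{q^j}$, so the paper's indices are reversed — a fixable slip (the corrected $d_i$ are exactly the columns of the displayed circulant), but one your convolution computation avoids. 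The trade-off is brevity: the paper's proof is two lines once the Section 3 machinery is in place, whereas yours requires the index-careful convolution step but stands entirely on its own.
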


\begin{proof}
Since $f$ is a generator polynomial of the cyclic code $\C$, $\C$ has the following generator
matrix:
\begin{eqnarray*}
\left[
\begin{array}{llllllll}
f_0 & f_1 & f_2 & f_3 & \cdots & f_{n-3} & f_{n-2} & f_{n-1} \\
f_{n-1} & f_0 & f_1 & f_2 & \cdots & f_{n-3} & f_{n-1} & f_{n-2} \\
f_{n-2} & f_{n-1} & f_0 & f_1 & \cdots & f_{n-5} & f_{n-4} & f_{n-3} \\
\vdots & \vdots & \vdots & \vdots & \ddots & \vdots & \vdots & \vdots \\
f_{2} & f_{3} & f_4 & f_5 & \cdots & f_{n-1} & f_{0} & f_{1} \\
f_{1} & f_2 & f_3 & f_4 & \cdots & f_{n-2} & f_{n-1} & f_{0}
\end{array}
\right].
\end{eqnarray*}

Since $\alpha$ is a normal element of $\gf(q^n)$ over $\gf(q)$, $\{\alpha, \alpha^q, \cdots, \alpha^{q^{n-1}} \}$ is a normal basis of $\gf(q^n)$ over $\gf(q)$.
Let
$$
d_i=\sum_{j=0}^{n-1} f_{(j+i-1) \bmod{n}} \alpha^{q^j}.
$$
for all $i$ with $1 \leq i \leq n$.
Then $d_i=d^{q^{i-1}}$ for all $i$.
It then follows from the discussion in Section \ref{sec-tracerepresentation} that $\C=\C_D$.
\end{proof}

\begin{example}
Let $q=2$ and $n=7$. Let $\C$ be the binary cyclic code with length $n=7$ and generator polynomial
$f(x)=1+x+x^3$.
Then $\C$ has dimension $k=4$.
Let $\beta$ be a generator of $\gf(2^7)^*$ with $\beta^7+\beta+1=0$. Then $\alpha=\beta^{70}$
is a normal element of $\gf(2^7)$. Then
$$
d=\alpha^{2^6}+ \alpha^{2^5}+ \alpha^{2^3}=\beta^{14}.
$$
Define $D=\{d_1, d_2, d_3, d_4, d_5, d_6, d_7\}$, where
\begin{eqnarray*}
&& d_1=d^{q^0}=\beta^{14}, \\
&& d_2=d^{q^1}=\beta^{28}, \\
&& d_3=d^{q^2}=\beta^{56}, \\
&& d_4=d^{q^3}=\beta^{112}, \\
&& d_5=d^{q^4}=\beta^{97}, \\
&& d_6=d^{q^5}=\beta^{67}, \\
&& d_7=d^{q^6}=\beta^{7}.
\end{eqnarray*}
Then the trace representation of $\C$ is the code $\C_D$ of (\ref{eqn-maincode}),
where $\tr$ is the trace function from $\gf(2^7)$ to $\gf(2)$.
\end{example}

The trace representation of Theorem \ref{thm-tracecycliccodes} works under the condition that
$\gcd(q, n)=1$ only. Hence, it does not apply to all cyclic codes over finite fields. The advantage of
Theorem \ref{thm-allcyclic} is that it applies to all linear codes. But its disadvantage is that
the trace function is from a large extension field $\gf(q^n)$ to $\gf(q)$. The trace representation
of all linear codes over finite fields gives automatically a trace representation of all cyclic codes
over finite fields.

\section{Summary and concluding remarks}

The main contribution of this paper is the trace representation of all linear codes over finite
fields described in Section \ref{sec-tracerepresentation}. Consequently, any linear code $\C$ over
a finite field may be generated with a definition set $D$ via the trace construction.
Hence, all types of linear codes over finite fields, including all cyclic codes, have a trace
representation without having any restriction. This proves the claim in \cite{Dingconf} and confirms
that the defining-set method is indeed a fundamental construction of all linear codes over finite
fields.

The trace construction of linear codes $\C_D$ has the following advantages over the generator
matrix construction.
\begin{enumerate}
\item The description of the code $\C_D$ is simpler.
\item The determination of the weight distribution of $\C_D$ is much easier using the trace
         construction, as the weights of the codewords are expressed as character sums of the
         form (\ref{eqn-weight}).
\item It is possible to develop lower bounds on the minimum distance of linear codes when they
         are given as a trace code \cite{DY13}.
\end{enumerate}
It is observed that in almost all cases of the determination of the weight distribution of linear
codes, the trace construction has been employed.

A linear code over a finite field has many trace representations.  Special types of linear codes
may have special forms of trace representations. A trace representation of quasi-cyclic codes
was given in \cite{LingSole}.  A trace representation of quasi-negacyclic codes codes was
presented in \cite{LiFu}. It would be interesting to develop other trace
representations for special subclasses of linear codes.



%
%

\end{document}